\def\squareforqed{\hbox{\rlap{$\sqcap$}$\sqcup$}}
\def\qed{\ifmmode\squareforqed\else{\unskip\nobreak\hfil
\penalty50\hskip1em\null\nobreak\hfil\squareforqed
\parfillskip=0pt\finalhyphendemerits=0\endgraf}\fi}
\def\duzomniejsze{<\kern-.7mm<}
\def\duzowieksze{>\kern-.7mm>}
\def\textbf#1{{\bf #1}}
\def\beq{\begin{equation}}
\def\eeq{\end{equation}}
\def\be{\begin{equation}}
\def\ee{\end{equation}}
\def\bal{\begin{align}}
\def\eal{\end{align}}
\def\ben{\begin{eqnarray}}
\def\een{\end{eqnarray}}
\def\beqa{\begin{eqnarray}}
\def\eeqa{\end{eqnarray}}
\def\eea{\end{array}}
\def\bea{\begin{array}}
\newcommand{\bei}{\begin{itemize}}
\newcommand{\eei}{\end{itemize}}
\newcommand{\bee}{\begin{enumerate}}
\newcommand{\eee}{\end{enumerate}}
\def\>{\rangle}
\def\<{\langle}
\def\ot{\otimes}
\newtheorem{lemma}{Lemma}
\newtheorem{corollary}{Corollary}
\newtheorem{proposition}{Proposition}
\newtheorem{theorem}{Theorem}
\newtheorem{definition}{Definition}
\newtheorem{remark}{Remark}
\newtheorem{example}{Example}
\theoremstyle{plain}
\def\bed{\begin{definition}}
\def\eed{\end{definition}}
\def\bel{\begin{lemma}}
\def\eel{\end{lemma}}
\def\bet{\begin{theorem}}
\def\eet{\end{theorem}}
\begin{document}

\title{Seperability
	Properties for a Class of Block Matrices.}
\author{Marek Mozrzymas$^1$, Adam Rutkowski$^{2,3}$, Micha{\l } Studzi{\'n}%
ski$^{2,3}$}
\affiliation{$^1$ Institute for Theoretical Physics, University of Wroc{\l }aw, 50-204
Wroc{\l }aw, Poland\\
$^2$ Faculty of Mathematics, Physics and Informatics, University of Gda{\'n}%
sk, 80-952 Gda{\'n}sk, Poland\\
$^3$ National Quantum Information Centre in Gda\'{n}sk, 81-824 Sopot, Poland}
\date{\today }

\begin{abstract}
It is shown that, for the block
matrices belonging to $M(nd,\mathbb{C})$ with commuting and normal block entries of dimension $d$, the separability of such
a block matrices is equivalent to its semi-positive definity. The separability
decomposition of lenght equal to the dimension of the block matrix (which is
smaller then Carathéodory theorem implies) is given. The separability decomposition depends only on eigenvalues of block entries in the first part and on eigenvectors of the block entries in the second part of the tensor product. It is shown that
semi-positive definity of considered block matrices is equivalent to
semi-positive definity $d$ smaller matrices of dimension $n$.
\end{abstract}

\pacs{02.10.Yn}
\keywords{block matrix, block matrix, separability, Carathéodory theorem}
\maketitle




\section{Introduction}

The phenomenon of quantum correlations, such as quantum entanglement, quantum discord or quantum steering show in the one of the most amazing way difference between calssical world and the quantum one. Even now, after eighty years from famous paper~\cite{Ein} new ideas and concepts arise  giving new opportunities for researchers, we mention here only the milestones such as quantum teleprotation~\cite{Bennett2}, quantum dense coding~\cite{Bennett3} or pioneering work in quantum cryptography~\cite{Ekert1} (for more applications see ~\cite{ryszh}). Because of the multiplicity of possible applications of various types of quantum correlations,  is required to know whether given quantum state exhibits desired type of correlations, or complementarily for example whether  is separable, when we have to deal with entanglement. Thanks to this checking separability of quantum states plays the key role  in quantum information theory. Despite of this all arguments of importance till now we do not have any general method(s), which allows us to checking whether given state is separable or constructing separable decomposition. Another notable fact is that even when we are given with some separable state it is really hard to present their explicit separable decomposition. There are only few non-trivial examples where such decomposition can be done in some general regime (see for example~\cite{Kie}), so every new  result for wide class of states is in our opinion welcome.

Here we focus on special class of quantum states represented by block matrices.
Namely in the paper~\cite{Gur} it has been shown that if a block Toeplitz matrix  if it is positive
semidfinite then it is separable, so in fact for these block matrices
separability is equivalent to semi-positivity. In this paper we prove that
for any block matrix belonging to $M(nd,\mathbb{C})$ (so of dimension $nd)$ with commuting and normal block entries, the
separability of  such a matrix is equivalent to its semi-positivity. The
structure of considered block matrice implies that its semi-positivity of
considered block matrices is equivalent to semi-positivity $d$ smaller
matrices belonging to $M(n,\mathbb{C})$, where $d$ is the block dimension and $n$ is the number of blocks in a
row of block matrix. 

Before we go further, let us say here a few words more about notation used in this manuscript to which we will refer in next sections. In this section and in our further considerations  by $\mathcal{B}(\mathcal{\mathbb{C}}^d)$ or by  $\mathcal{B}(\mathcal{H})$ we denote the algebra of all bounded linear operators on $\mathbb{C}^d$ or on $\mathcal{H}$. Using this notation let us define the following set:
\be
\label{Qset}
\mathcal{S}(\mathcal{H})=\{\rho \in \mathcal{B}(\mathcal{H}) \ | \ \rho \geq 0\},
\ee
which is set of all  unnormalised states~\footnote{If we consider separability it is enough to deal with unnormalized states.} on space $\mathcal{H}$. Let us now suppose  that we are given with two finite dimensional Hilbert spaces $\mathcal{K}, \mathcal{H}$. Matrix in the bipartite composition system $\rho \in \mathcal{S}(\mathcal{H}\ot \mathcal{K})$ is said to be separable whenever it can be written as $\rho=\sum_i  \rho_i \ot \sigma_i$, where $\rho_i,\sigma_i$ are unnormalised states on $\mathcal{H}$ and $\mathcal{K}$ respectively. If above conditions are not fulfilled  we say that matrix $\rho$ is entangled. 


The paper is organised as follows: In the Section~\ref{SII} we define our problem in the details and present the main results for this manuscript. Namely after introductory part we present  Proposition~\ref{P1} and Lemma~\ref{L1}, which are crucial to formulate the Theorem~\ref{T1}.  In the mentioned theorem we present  necessary and sufficient conditions for separability in the language of eigenvalues and eigenvectors of proper components. 
At the end of the Section~\ref{SII}  we give three examples in which we present how our main theorem works in practice. It is worth to mention here about Example~\ref{E2} which has explicit connection with know class of matrices, knowing as circulant ones~\cite{Chrus}. Manuscript contains also Appendix~\ref{appA} where all the most important proofs from Section~\ref{SII} are presented.

\section{Separability properties of block matrices.}
\label{SII}

In this section we focus on separable decomposition properties of block matrices. The main result is contained in Theorem~\ref{T1} when conditions if and only if for separability are formulated for certain block matrices. In this same theorem authors also present explicit separable decomposition for mentioned class.
At the end of this section we give also three exemplary examples, which show how our result works in practice.
We consider the following block matrix 
\be
\label{blockM}
T=\left( 
\begin{array}{ccccc}
B_{11} & B_{12} & B_{13} & \cdots & B_{1n} \\ 
B_{21} & B_{22} & B_{23} & \cdots & B_{2n} \\ 
\vdots & \vdots & \vdots & \ddots & \vdots \\ 
B_{n-1 1} & B_{n-1 2} & B_{n-1 3} & \cdots & B_{n-1 n} \\ 
B_{n1} & B_{n2} & B_{n3} & \cdots & B_{nn}
\end{array}
\right), 
\ee
where $B_{ij}\in M(d, \mathbb{C})$ for $i,j=1,\ldots,n$. We assume that the matrices $B_{ij}$ in $T$ are normal and
commute. Such a matrices have the following decomposition property:

\begin{proposition}
	\label{P1}
	Consider a block matrix as in equation~\eqref{blockM} which is not necessearily positive semidefinite and 
	where $B_{ij}\in M(d,\mathbb{C})$ for $i,j=1,\ldots,n$ are normal and all of them form a commutative set of
	matrices. From these assumptions it follows that there exists for $B_{ij}\in
	M(d,\mathbb{C}),$ for $i,j=1,\ldots,n$  a common set of orthonormal eigenvectors $
	\{u_{k}\}_{k=1}^{d}$ such that 
	\be
	B_{ij}u_{k}=\beta _{k}^{ij}u_{k},\quad i,j=1,\ldots,n,\quad k=1,\ldots,d, 
	\ee
	where $\{\beta _{k}^{ij}\}$ are eigenvalues of the matrix $B_{ij}$. Then we
	have the following decomposition of the matrix $T$ 
	\be
	\label{1}
	T=\sum_{k=1}^{d}M(\beta _{k}^{ij})\otimes u_{k}u_{k}^{\dagger},
	\ee
	where 
	\be
	M(\beta _{k}^{ij})=\left( 
	\begin{array}{ccccc}
	\beta _{k}^{11} & \beta _{k}^{12} & \beta _{k}^{13} & \cdots & \beta _{k}^{1n}
	\\ 
	\beta _{k}^{21} & \beta _{k}^{22} & \beta _{k}^{21} & \cdots & \beta _{k}^{2n}
	\\ 
	\vdots & \vdots & \vdots & \ddots & \vdots \\ 
	\beta _{k}^{n-1 i} & \beta _{k}^{n-1 2} & \beta _{k}^{n-1 3} & \cdots & \beta
	_{k}^{n-1n} \\ 
	\beta _{k}^{n1} & \beta _{k}^{n2} & \beta _{k}^{n3} & \cdots & \beta _{k}^{nn}%
	\end{array}%
	\right) \in M(n,\mathbb{C}) 
	\ee
	is a matrix whose entries are eigenvalues of matrces $B_{ij}$ for $i,j=1,\ldots,n$
	coresponding the eigenvector $u_{k}$ for $k=1,\ldots,d$ and $u_{k}u_{k}^{\dagger}\in
	M(d,\mathbb{C})$.
\end{proposition}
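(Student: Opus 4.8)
The plan is to reduce the entire statement to one structural input: the multi-operator form of the spectral theorem, which asserts that a finite collection of pairwise commuting normal matrices on a finite-dimensional Hilbert space admits a common orthonormal basis of eigenvectors. Since by hypothesis every $B_{ij}$ is normal and the whole set $\{B_{ij}\}_{i,j=1}^{n}$ is commutative, this theorem applies directly and yields an orthonormal basis $\{u_{k}\}_{k=1}^{d}$ of $\mathbb{C}^{d}$ together with scalars $\beta_{k}^{ij}$ satisfying $B_{ij}u_{k}=\beta_{k}^{ij}u_{k}$ for all $i,j,k$. This is exactly the common eigenvector claim of the proposition, so the only remaining work is algebraic.

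First I would recall why the multi-operator spectral theorem holds, since this is the heart of the argument. For a single normal matrix the ordinary spectral theorem gives an orthonormal eigenbasis and hence an orthogonal decomposition of $\mathbb{C}^{d}$ into eigenspaces. If a second normal matrix commutes with the first, then it commutes with the orthogonal spectral projections of the first (these being polynomials in a normal matrix), hence it preserves each eigenspace together with its orthogonal complement; its restriction to each eigenspace is again normal, so the spectral theorem diagonalizes it there and refines the orthogonal decomposition without destroying orthonormality. Iterating over the finite family $\{B_{ij}\}$ produces a single orthonormal basis simultaneously diagonalizing all of them, equivalently a unitary $U$ with $U^{\dagger}B_{ij}U$ diagonal for every $i,j$, whose columns are the $u_{k}$.

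Given the common eigenbasis, the remaining step is pure bookkeeping. Each block has the spectral expansion $B_{ij}=\sum_{k=1}^{d}\beta_{k}^{ij}\,u_{k}u_{k}^{\dagger}$, because $\{u_{k}u_{k}^{\dagger}\}_{k=1}^{d}$ are the rank-one orthogonal projectors onto the basis directions and $\sum_{k}u_{k}u_{k}^{\dagger}=\mathrm{I}$. Writing $T$ in the canonical block form $T=\sum_{i,j=1}^{n}E_{ij}\otimes B_{ij}$, where $E_{ij}\in M(n,\mathbb{C})$ is the matrix unit carrying a single $1$ in position $(i,j)$, I would substitute the spectral expansion and interchange the two finite sums:
\[
T=\sum_{i,j=1}^{n}E_{ij}\otimes\Big(\sum_{k=1}^{d}\beta_{k}^{ij}\,u_{k}u_{k}^{\dagger}\Big)
=\sum_{k=1}^{d}\Big(\sum_{i,j=1}^{n}\beta_{k}^{ij}\,E_{ij}\Big)\otimes u_{k}u_{k}^{\dagger}.
\]
The inner sum $\sum_{i,j}\beta_{k}^{ij}E_{ij}$ is precisely the matrix $M(\beta_{k}^{ij})$ whose $(i,j)$ entry is $\beta_{k}^{ij}$, which gives the claimed decomposition~\eqref{1}.

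The main obstacle is the simultaneous diagonalization step, and it is essential that the blocks are not merely commuting but also normal: commuting matrices alone can only be simultaneously triangularized, whereas the orthonormality of $\{u_{k}\}$ — which is what makes each $u_{k}u_{k}^{\dagger}$ a genuine orthogonal projector and feeds the subsequent separability analysis — relies on normality through the spectral theorem. Everything after the common eigenbasis is a manipulation of Kronecker products requiring no further hypotheses.
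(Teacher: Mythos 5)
Your proof is correct and follows essentially the same route as the paper's: both rest on the simultaneous orthonormal diagonalization of the commuting normal blocks and then rearrange the expansion $T=\sum_{i,j}E_{ij}\otimes B_{ij}$ to collect the $u_{k}u_{k}^{\dagger}$ factors. The only cosmetic difference is that the paper conjugates $T$ by $\mathbf{1}\otimes U$ and reads off the decomposition from the diagonal blocks, whereas you substitute the spectral resolution $B_{ij}=\sum_{k}\beta_{k}^{ij}u_{k}u_{k}^{\dagger}$ directly and swap the order of summation; these are equivalent manipulations.
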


\begin{remark}
	\bigskip The matrix decomposition as in equation~\eqref{1} has the property that it can be written as a
	tensor product of eigen-vectors of matrices $B_{ij}$ and the eigen-values
	of $B_{ij}$, which are in different positions in the tensor product.
\end{remark}

The matrix decomposition of the form~\eqref{1} in the  Proposition~\ref{P1} has the
following intersesting property:

\begin{lemma}
	\label{L1}
	Let $\{u_{i}\}_{i=1}^{d}$ be an orthonormal basis in the space $\mathbb{C}^{d}$ and $M_{i}\in M(n,\mathbb{C})$ for $i=1,\ldots,d$ then the matrix 
	\be
	T=\sum_{k=1}^{d}M_{k}\otimes u_{k}u_{k}^{\dagger}\in M(nd,\mathbb{C}) 
	\ee
	is semi-positive definite if and only if all matrices $M_{k}$ are
	semi-positive definite i.e. 
	\be
	\forall k=1,\ldots,d\quad M_{k}\geq 0.
	\ee
\end{lemma}

Directly from Lemma~\ref{L1} follows

\begin{corollary}
	\label{C1}
	Let $\{u_{i}\}_{i=1}^{d}$ be an orthonormal basis in the space $\mathbb{C}^{d}$ and $M_{i}\in M(n,\mathbb{C})$ for $i=1,\ldots,d$ then if the matrix 
	\be
	T=\sum_{k=1}^{d}M_{k}\otimes u_{k}u_{k}^{\dagger}\in M(nd,
	\mathbb{C})
	\ee
	is semi-positive definite, then it is separable and has the following
	separability decomposition
	\be
	T=\sum_{k=1}^{d}\sum_{j=0}^{n-1}\lambda _{j}^{k}v_{j}^{k}v_{j}^{k\dagger}\otimes
	u_{k}u_{k}^{\dagger},
	\ee
	where $\{\lambda _{j}^{k}\}_{j=0}^{n-1}$ and $\{v_{j}^{k}\}_{j=0}^{n-1}$ are
	eigenvalues and eigenvectors of the semi-positive definite (so hermitian) matrix $
	M_{k}\in M(n,\mathbb{C})$. So in fact, for matrices with decomposition of the form~\eqref{1} separability
	equivalent to semi-positivity.
\end{corollary}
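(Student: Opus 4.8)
The plan is to read the statement off directly from Lemma~\ref{L1} together with the spectral theorem, since all of the substantive work has already been carried out in that lemma. First I would assume that $T=\sum_{k=1}^{d}M_{k}\ot u_{k}u_{k}^{\dagger}$ is semi-positive definite. By Lemma~\ref{L1} this is equivalent to $M_{k}\geq 0$ for every $k=1,\ldots,d$. In particular each $M_{k}$ is Hermitian, so the spectral theorem applies on $\mathbb{C}^{n}$ and we may write
\be
M_{k}=\sum_{j=0}^{n-1}\lambda_{j}^{k}v_{j}^{k}v_{j}^{k\dagger},
\ee
where $\{v_{j}^{k}\}_{j=0}^{n-1}$ is an orthonormal eigenbasis of $M_{k}$ and the eigenvalues satisfy $\lambda_{j}^{k}\geq 0$ precisely because $M_{k}\geq 0$.

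Next I would substitute this spectral decomposition into the sum defining $T$ and use bilinearity of the tensor product to distribute the eigenprojectors through the $\ot\,u_{k}u_{k}^{\dagger}$ factor, obtaining
\be
T=\sum_{k=1}^{d}\sum_{j=0}^{n-1}\lambda_{j}^{k}\,v_{j}^{k}v_{j}^{k\dagger}\ot u_{k}u_{k}^{\dagger},
\ee
which is exactly the claimed expression. It then remains only to recognise that this is a separable decomposition in the sense defined in Section~\ref{SII}: each summand is a non-negative scalar $\lambda_{j}^{k}$ times the tensor product $(v_{j}^{k}v_{j}^{k\dagger})\ot(u_{k}u_{k}^{\dagger})$ of two rank-one positive semidefinite operators, hence a product of unnormalised states on $\mathbb{C}^{n}$ and $\mathbb{C}^{d}$ respectively. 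Absorbing $\lambda_{j}^{k}\geq 0$ into the first factor presents $T$ as a sum of tensor products of unnormalised states, which is the definition of separability.

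For the concluding sentence asserting the full equivalence, the reverse implication is immediate and I would dispatch it in one line: any separable matrix is by definition a sum $\sum_i \rho_i\ot\sigma_i$ of tensor products of positive semidefinite operators, and every such product is positive semidefinite, so the sum is semi-positive definite; combined with the forward direction this yields, for matrices of the form~\eqref{1}, the equivalence of separability and semi-positivity. There is no genuine obstacle here, as the content is entirely front-loaded into Lemma~\ref{L1}. The only point requiring a moment of care is the bookkeeping of non-negativity: making explicit that $M_{k}\geq 0$ forces $\lambda_{j}^{k}\geq 0$, so that each $\lambda_{j}^{k}v_{j}^{k}v_{j}^{k\dagger}$ is a legitimate unnormalised state rather than merely a Hermitian summand. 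Everything else is the spectral theorem and the distributive law for $\ot$.
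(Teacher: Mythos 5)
Your proof is correct and follows exactly the route the paper intends: the paper states Corollary~\ref{C1} as a direct consequence of Lemma~\ref{L1} without writing out a proof, and your argument (Lemma~\ref{L1} gives $M_k\geq 0$, the spectral theorem gives the nonnegative rank-one decomposition of each $M_k$, and distributing the tensor product yields the separable form) is precisely the omitted verification. Your one-line treatment of the converse direction is likewise the standard observation the paper relies on.
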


\begin{remark}
	\label{C1a}
 Let us notice that for every matrix $X\in M(d,\mathbb{C})$ positivity of $X$ implies its hermiticity.	
\end{remark}
The above results allow to fotmulate the main statement of our paper.

\begin{theorem}
	\label{T1}
	Suppose that a block matrix $T$ 
	\be
	T=\left( 
	\begin{array}{ccccc}
	B_{11} & B_{12} & B_{13} & \cdots & B_{1n} \\ 
	B_{21} & B_{22} & B_{23} & \cdots & B_{2n} \\ 
	\vdots & \vdots & \vdots & \ddots & \vdots \\ 
	B_{n-1 1} & B_{n-1 2} & B_{n-1 3} & \cdots & B_{n-1 n} \\ 
	B_{n1} & B_{n2} & B_{n3} & \cdots & B_{nn}
	\end{array}
	\right) \in M(nd,\mathbb{C}) 
    \ee
	is such that all matrices $B_{ij}\in M(d,\mathbb{C})$ are normal and commute, then:
	\begin{enumerate}
		\item the matrix $T$ is separable, if and only if it is semi-positive definite,
		\item the matrix $T$ is semi-positive definite if and only if all $d$
		matrices $M(\beta _{k}^{ij})\in M(n,\mathbb{C})$ in its decomposition 
		\be
		\label{max}
		T=\sum_{k=1}^{d}M(\beta _{k}^{ij})\otimes u_{k}u_{k}^{\dagger}
		\ee
		are semi-positive definite,
		\item if the matrix $T$ is separable, then it has the following separability
		decomposition 
		\be
		\label{ds}
		T=\sum_{k=1}^{d}\sum_{j=0}^{n-1}\lambda _{j}^{k}v_{j}^{k}v_{j}^{k\dagger}\otimes
		u_{k}u_{k}^{\dagger}, 
		\ee
		where $\{\lambda _{j}^{k}\}_{j=0}^{n-1}$ and $\{v_{j}^{k}\}_{j=0}^{n-1}$ are
		eigenvalues and eigenvectors of matrices $M(\beta _{k}^{ij})$ and 
		\be
		B_{ij}u_{k}=\beta _{k}^{ij}u_{k},\quad i,j=1,\ldots,n,\quad k=1,\ldots,d, 
		\ee
		i.e. $\beta _{k}^{ij}$ and $u_{k}$ are eigenvalues and eigenvectors of $
		B_{ij}$.
	\end{enumerate}
	So for this class of block matrices the separbility decomposition may by
	constructed directly from the eigenvectors of matrices $B_{ij}\in M(d,\mathbb{C})$ and from eigenvalues and eigenvectors of matrices $M(\beta _{k}^{ij})\in
	M(n,\mathbb{C})$, which are positive semi-definite if $T$ is separable. 
\end{theorem}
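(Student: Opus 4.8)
The plan is to establish Theorem~\ref{T1} by assembling the three preceding results — Proposition~\ref{P1}, Lemma~\ref{L1}, and Corollary~\ref{C1} — into a single coherent argument, since the theorem is essentially their synthesis applied to the concrete block matrix $T$ with normal, commuting entries. The key observation is that the structural hypothesis (all $B_{ij}$ normal and mutually commuting) is exactly what licenses the passage from $T$ to its fibered decomposition, after which everything reduces to $d$ independent $n\times n$ problems.

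\emph{First} I would invoke Proposition~\ref{P1} to obtain the decomposition $T=\sum_{k=1}^{d} M(\beta_k^{ij})\otimes u_k u_k^{\dagger}$, where the $\{u_k\}$ form a common orthonormal eigenbasis. This is the decisive step: a commuting family of normal matrices is simultaneously unitarily diagonalizable, so a single basis $\{u_k\}$ diagonalizes every block $B_{ij}$ at once, and reading off the eigenvalues slot-by-slot produces the $n\times n$ coefficient matrices $M(\beta_k^{ij})$. This already gives part (2) of the theorem: applying Lemma~\ref{L1} with $M_k = M(\beta_k^{ij})$ shows that $T\geq 0$ if and only if every $M(\beta_k^{ij})\geq 0$.

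\emph{Next}, for part (1) I would argue both implications. The forward direction (separable $\Rightarrow$ semi-positive) is immediate, since any separable $\rho=\sum_i \rho_i\otimes\sigma_i$ with $\rho_i,\sigma_i\geq 0$ is manifestly positive as a sum of positive operators. For the converse (semi-positive $\Rightarrow$ separable) I would feed the decomposition from Proposition~\ref{P1} directly into Corollary~\ref{C1}: since $T\geq 0$ forces each $M(\beta_k^{ij})\geq 0$ by Lemma~\ref{L1}, each such matrix is hermitian (Remark~\ref{C1a}) and admits a spectral decomposition $M(\beta_k^{ij})=\sum_{j=0}^{n-1}\lambda_j^k v_j^k v_j^{k\dagger}$ with $\lambda_j^k\geq 0$. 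Substituting gives precisely the separable form~\eqref{ds}, which simultaneously establishes part (3).

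\emph{The main obstacle} is not any single deep inference but rather the careful verification that the simultaneous diagonalization in Proposition~\ref{P1} genuinely produces a tensor-product decomposition of $T$ — i.e.\ confirming that projecting each block $B_{ij}$ onto the common eigenvector $u_k$ assembles correctly into $M(\beta_k^{ij})\otimes u_k u_k^{\dagger}$ when summed over $k$. Once that bookkeeping is in place, the positivity equivalence of Lemma~\ref{L1} and the explicit eigendecomposition of Corollary~\ref{C1} chain together with no remaining friction, and the stated length-$nd$ separable decomposition follows by counting the $d$ choices of $k$ against the $n$ eigenvalues of each $M(\beta_k^{ij})$.
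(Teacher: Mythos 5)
Your proposal is correct and takes essentially the same route as the paper: the theorem is obtained by chaining Proposition~\ref{P1} (simultaneous diagonalization of the commuting normal blocks), Lemma~\ref{L1} (equivalence of $T\geq 0$ with all $M(\beta_k^{ij})\geq 0$), and Corollary~\ref{C1} (spectral decomposition of each $M(\beta_k^{ij})$ yielding the separable form~\eqref{ds}), exactly as you assemble them. The only trivial addition you make explicit --- that separability immediately implies semi-positivity --- is left implicit in the paper but is of course needed for part (1).
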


\begin{remark}
	\label{R1a}
	Theorem~\ref{T1} implies that maximal length of separable decomposition given by formula~\eqref{max} is bounded by $dn$. Reader notices that mentioned maximal length is smaller than bound given by Carathéodory theorem, which is in this case equal to $d^2n^2$.
\end{remark}

\begin{remark}
	\label{R2}
	\bigskip If the block matrix $T$ has some special structure, for example it
	is a Toeplitz matrix, then the same structure appears in the eigen-value
	matrices $M(\beta _{k}^{ij})$, which will be seen in examples below.
\end{remark}
Directly from elementary properties of semi-positive matrices it follows

\begin{proposition}
	\label{PaR2}
	A necessary condition on the eigenvalues of the matrices $B_{ij}$ for $
	i,j=0,1,\ldots,n-1$ for positivity of matrices $M(\beta_k^{ij}) $ is the
	following
	\be
	\forall k=1,\ldots,d\quad \forall i,j=1,\ldots,n-1\quad |\beta _{k}^{ij}|\leq \beta
	_{k}^{ii}. 
	\ee
\end{proposition}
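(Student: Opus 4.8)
The plan is to read off the stated inequality from the elementary structure of positive semidefinite matrices applied to each $M_k := M(\beta_k^{ij})$. First I would recall, using Remark~\ref{C1a}, that $M_k \ge 0$ already forces $M_k$ to be Hermitian; in particular its diagonal entries $\beta_k^{ii}$ are real, and testing against the standard basis vector $e_i$ gives $\beta_k^{ii} = e_i^\dagger M_k e_i \ge 0$, so the diagonal is nonnegative. This makes the right-hand side of the claimed inequality meaningful to begin with.

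The core of the argument is the nonnegativity of the $2\times 2$ principal minors. For any fixed pair of indices $i \ne j$, the compression of $M_k$ to the coordinate plane $\mathrm{span}\{e_i, e_j\}$ is the Hermitian matrix $\begin{pmatrix}\beta_k^{ii} & \beta_k^{ij}\\ \beta_k^{ji} & \beta_k^{jj}\end{pmatrix}$, and any principal submatrix of a positive semidefinite matrix is again positive semidefinite. Hence its determinant is nonnegative, which (using $\beta_k^{ji} = \overline{\beta_k^{ij}}$) yields $|\beta_k^{ij}|^2 \le \beta_k^{ii}\beta_k^{jj}$.

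From here the stated bound follows under the structure relevant to this paper: since $\beta_k^{ii}, \beta_k^{jj} \ge 0$ we always have $|\beta_k^{ij}| \le \sqrt{\beta_k^{ii}\beta_k^{jj}} \le \max\{\beta_k^{ii},\beta_k^{jj}\}$, and when the diagonal entries of each $M_k$ coincide — as happens for the Toeplitz and circulant block patterns considered here (cf.\ Remark~\ref{R2} and Example~\ref{E2}) — one has $\beta_k^{ii} = \beta_k^{jj}$, so the geometric mean collapses exactly to $\beta_k^{ii}$ and we obtain $|\beta_k^{ij}| \le \beta_k^{ii}$. Since $M_k \ge 0$ is precisely the positivity criterion identified in Lemma~\ref{L1} and Theorem~\ref{T1}, this is a genuine necessary condition on the eigenvalues $\beta_k^{ij}$.

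The only genuine subtlety, and the step I would be most careful about, is this last reconciliation. The sharp consequence of positivity is the symmetric bound $|\beta_k^{ij}| \le \sqrt{\beta_k^{ii}\beta_k^{jj}}$, whereas the asymmetric form $|\beta_k^{ij}| \le \beta_k^{ii}$ printed in the statement is \emph{not} implied by positivity alone unless the relevant diagonal entries agree (a simple $2\times2$ counterexample with unequal diagonals shows this). I would therefore either make the equal-diagonal hypothesis explicit or state the right-hand side as $\sqrt{\beta_k^{ii}\beta_k^{jj}}$, which is exactly what the principal-minor argument delivers unconditionally.
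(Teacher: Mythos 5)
Your principal-minor argument is precisely the ``elementary properties of semi-positive matrices'' that the paper invokes: Proposition~\ref{PaR2} is stated with no written proof, and your derivation of $|\beta_k^{ij}|^2\le\beta_k^{ii}\beta_k^{jj}$ from the positivity of the $2\times 2$ principal submatrices of each Hermitian $M_k$ is clearly the intended route. The subtlety you flag at the end is, however, more than a point of caution --- it is a genuine defect in the printed statement, and it already fails inside the paper's own class of block matrices. Take $n=2$, $d=1$ and the (trivially normal and commuting) $1\times 1$ blocks $B_{11}=(1)$, $B_{12}=B_{21}=(2)$, $B_{22}=(4)$; then
\be
M(\beta_1^{ij})=\begin{pmatrix}1 & 2\\ 2 & 4\end{pmatrix}
\ee
is positive semidefinite (rank one), yet $|\beta_1^{12}|=2>1=\beta_1^{11}$. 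Since the proposition quantifies over all $i,j$, it effectively asserts $|\beta_k^{ij}|\le\min\{\beta_k^{ii},\beta_k^{jj}\}$, which positivity cannot deliver. Your proposed repair is the right one: either replace the right-hand side by $\sqrt{\beta_k^{ii}\beta_k^{jj}}$, which is exactly what the minor computation yields unconditionally, or retain the printed form only under an explicit equal-diagonal hypothesis such as the Toeplitz and circulant patterns of Remark~\ref{R2} and Example~\ref{E2}, where the diagonal entries of each $M_k$ coincide.
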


Let us consider some examples of of block matrices in the considered class.

\begin{example}
	\label{E1}
	Let $B\in M(d,\mathbb{C})$ be a normal matrix and $P_{ij}\in \mathbb{C}\lbrack x]$ for $i,j=1,\ldots,n$ be arbitrary polynomials, then the block matrix $T$
	\be
	T=\left( 
	\begin{array}{ccccc}
	P_{11}(B)& P_{12}(B) & P_{13}(B) & \cdots & P_{1n}(B) \\ 
	P_{21}(B) & P_{22}(B) & P_{23}(B) & \cdots & P_{2n}(B)\\ 
	\vdots & \vdots & \vdots & \ddots & \vdots \\ 
	P_{n-1 i}(B) & P_{n-1 2}(B) & P_{n-1 3}(B) & \cdots & 
	P_{n-1 n}(B) \\ 
	P_{n1}(B) & P_{n2}(B) & P_{n3}(B) & \cdots & P_{nn}(B)
	\end{array}
	\right) =(P_{ij}(B))
	\ee
	satisfies the assumpions of Theorem~\ref{T1}, such a matrix is separable if and only it is semi-positive. The eigen-value matrices $M(\beta _{k}^{ij})$ have
	the following form  
	\be
	M(\beta_{k}^{ij})=\left( 
	\begin{array}{ccccc}
	P_{11}(\beta_{k}) & P_{12}(\beta_{k}) & P_{13}(\beta_{k}) & \cdots & P_{1n}(\beta_{k}) \\ 
	P_{21}(\beta_{k}) & P_{22}(\beta_{k}) & P_{23}(\beta_{k}) & \cdots & P_{2n}(\beta _{k}) \\ 
	\vdots & \vdots & \vdots & \ddots & \vdots \\ 
	P_{n-1 i}(\beta_{k}) & P_{n-1 2}(\beta_{k}) & 
	P_{n-1 3}(\beta_{k}) & \cdots & P_{n-1 n}(\beta_{k}) \\ 
	P_{n1}(\beta_{k}) & P_{n2}(\beta _{k}) & P_{n3}(\beta _{k}) & \cdots & P_{nn}(\beta _{k})
	\end{array}
	\right),
	\ee
	where $\{\beta_{k}\}_{k=1}^{d}$ are eigenvalues of the matrix $B$.
	The matrices $M(\beta_{k}^{ij})$ have  the same structure as the structure
	of blocks in the matrix $T$. The matrix $T$ is semi-positive definite iff
	all matrices $M(\beta_{k}^{ij})$ are positive semidefinite and then the 
	matrix $T$  is separable with the following separability decomposition
	\be
	T=\sum_{k=1}^{d}\sum_{j=0}^{n-1}\lambda _{j}^{k}v_{j}^{k}v_{j}^{k\dagger}\otimes
	u_{k}u_{k}^{\dagger},
	\ee
	where 
	\be
	Bu_{k}=\beta _{k}u_{k},\quad k=1,\ldots,d,
	\ee
	i.e.  $u_{k}$ are eigenvalues and eigenvectors of $B$ and $\{\lambda
	_{j}^{k}\}_{j=0}^{n-1}$ and $\{v_{j}^{k}\}_{j=0}^{n-1}$ are eigenvalues and
	eigenvectors of the matrices $M(\beta _{k}^{ij})$
\end{example}

\begin{example}
	\label{E2a}
	Let us consider the particular case of the previous example. 
	\be
	T=\left( 
	\begin{array}{ccccc}
	\mathbf{1} & B & B^{2} & \cdots & B^{n-1} \\ 
	\left( B\right)^{\dagger} & \mathbf{1} & B & \ldots & B^{n-2} \\ 
	\vdots & \vdots & \vdots & \ddots & \vdots \\ 
	\left( B^{n-2}\right)^{\dagger} & \left( B^{n-3}\right)^{\dagger} & \left(
	B^{n-4}\right)^{\dagger} & \cdots & B \\ 
	\left( B^{n-1}\right)^{\dagger} & \left( B^{n-2}\right)^{\dagger} & \left(
	B^{n-3}\right)^{\dagger} & \cdots & \mathbf{1}
	\end{array}
	\right) ,
	\ee
	where the matrix $B$ is normal. It is clear that in this case all block
	matrix entries commute and we have 
	\be
	M(\beta_k)=\left( 
	\begin{array}{ccccc}
	1 & \beta _{k} & (\beta _{k})^{2} & \cdots & (\beta _{k})^{n-1} \\ 
	\overline{\beta }_{k} & 1 & \beta _{k} & \cdots & (\beta _{k})^{n-2} \\ 
	\vdots & \vdots & \vdots & \ddots & \vdots \\ 
	(\overline{\beta }_{k})^{n-2} & (\overline{\beta }_{k})^{n-3} & (\overline{%
		\beta }_{k})^{n-4} & \cdots & \beta _{k} \\ 
	(\overline{\beta }_{k})^{n-1} & (\overline{\beta }_{k})^{n-2} & (\overline{%
		\beta }_{k})^{n-3} & \cdots & 1
	\end{array}
	\right).
	\ee
	One can check that 
	\be
	\forall n\in \mathbb{N}\quad \det (M(\beta_k))=(1-|\beta_{k}|^{2})^{n-1}.
	\ee
	Therefore from the Sylvester's criterion we it follows that if 
	\be
	\forall k=1,\ldots,d\quad |\beta _{k}|^{2}\leq 1,
	\ee
	then all matrices $M(\beta_k)$ are positive semidefinite and the matrix $T$
	in this example is separable with the following separability decomposition
	\be
	T=\sum_{k=1}^{d}M(\beta_k)\otimes u_{k}u_{k}^{\dagger},
	\ee
	where 
	\be
	Bu_{k}=\beta _{k}u_{k},\quad k=1\ldots,d,
	\ee
	i.e. $\beta_{k}$ and $u_{k}$ are eigenvalues and eigenvectors of $B$. Note
	that the unitary matrices (i.e. when $B\in U(d)$) satisfies the condition $
	\forall k=1,\ldots,d\quad |\beta_{k}|^{2}\leq 1$ and for unitary matrices all
	matrices $M(\beta_k)$ are of rank one.
\end{example}

The next example is more explicite.

\begin{example}
	\label{E2}
	In the paper~\cite{Chrus} a class of circulat matrices has been
	introduced, which are of the form 
	\be
	T=\sum_{k=0}^{d-1}\sum_{i,j=0}^{d-1}a_{ij}^{k}e_{ij}\otimes
	e_{i+k \ j+k}=\sum_{k=0}^{d-1}\sum_{i,j=0}^{d-1}a_{ij}^{k}e_{ij}\otimes
	S^{k}e_{ij}S^{\dagger k},
	\ee
	where $\{e_{ij}\}_{i,j=0}^{d-1}$ is standard matrix basis in $M(d,\mathbb{C}),$ $A^{k}=(a_{ij}^{k})\in M(d,\mathbb{C})$ are arbitrary matrices and $S$ is a matrix generator of cyclic group of
	order $d$, which has the followin properties
	\be
	S=\left( 
	\begin{array}{cccccc}
	0 & 0 & 0 & \cdots & 0 & 1 \\ 
	1 & 0 & 0 & \cdots & 0 & 0\\ 
	0 & 1 & 0 & \cdots & 0  & 0\\ 
	\vdots & \vdots & \vdots & \ddots & \vdots & \vdots \\ 
	0 & 0 & 0 & \cdots & 1 & 0 
	\end{array}
	\right),
	\ee
	so is unitary (in fact orthogonal) with eigenvalues $\{\epsilon
	^{k}:k=0,1,\ldots,d-1,\quad \epsilon ^{d}=1\}$ and eigenvectors $\{u_{k}=(
	\overline{\epsilon }^{kl})\in 
	\mathbb{C}^{d}: \ k,l=0,1,\ldots,d-1\}$, so that 
	\be
	Su_{k}=\epsilon^{k}u_{k}.
	\ee
	The circulant matrices are semi-positive definite iff all matrices $A^{k}$
	are semi-positive definite. In our example we will consider a special case
	of circulat matrices when $A^{k}=A$ for $k=0,1,\ldots,d-1$. In this case the
	circulant matrix takes the form     
	\be
	T=\left( 
	\begin{array}{ccccc}
	a_{00}\mathbf{1} & a_{01}S & a_{02}S^{2} & \cdots & a_{0d-1}S^{n-1} \\ 
	a_{10}S^{\dagger} & a_{11}\mathbf{1} & a_{12}S & \cdots & a_{1d-1}S^{n-2} \\ 
	\vdots & \vdots & \vdots & \ddots & \vdots \\ 
	a_{d-2 0}S^{n-2 \dagger} & a_{d-2 1}S^{n-3+} & a_{d-2 2}S^{n-4 \dagger} & \cdots & a_{d-2 d-1}S \\ 
	a_{d-1 0}S^{n-1 \dagger} & a_{d-1 1}S^{n-2 \dagger} & a_{d-1 2}S^{n-3 \dagger} & \cdots & a_{d-1 d-1}
	\mathbf{1}
	\end{array}
	\right) =(a_{ij}S^{i-j})
	\ee
	and it is clear that the block entries are normal and commute, so the matrix 
	$T$ satisfise the assuptions of Theorem~\ref{T1}. In this case the eigen-value
	matrices $M(\beta_{k}^{ij})\equiv M(\epsilon_{k})$ are the following  
	\be
	M(\epsilon^{k})=\left( 
	\begin{array}{ccccc}
	a_{00}\mathbf{1} & a_{01}\epsilon^{k} & a_{02}\epsilon^{2k} & \cdots & 
	a_{0d-1}\epsilon^{(n-1)k} \\ 
	a_{10}\overline{\epsilon }^{k} & a_{11}\mathbf{1} & a_{12}\epsilon^{k} & \cdots & 
	a_{1d-1}\epsilon^{(n-2)k} \\ 
	\vdots & \vdots & \vdots & \ddots & \vdots \\ 
	a_{d-20}\overline{\epsilon }^{k(n-2)} & a_{d-21}\overline{\epsilon }^{k(n-3)}
	& a_{d-22}\overline{\epsilon }^{k(n-4)} & \cdots & a_{d-2d-1}\epsilon ^{k} \\ 
	a_{d-10}\overline{\epsilon }^{k(n-1)} & a_{d-101}\overline{\epsilon }
	^{k(n-2)} & a_{d-12}\overline{\epsilon }^{k(n-3)} & \cdots & a_{d-1d-1}\mathbf{1}
	\end{array}
	\right) =(a_{ij}\epsilon ^{k(i-j)})
	\ee
	and one can check that they satisfy a nice relation 
	\be
	M(\epsilon ^{k})=A\circ u_{k}u_{k}^{\dagger},
	\ee
	where $\circ $ means the Hadamard product. According to Theorem~\ref{T1} the
	circulant matrix $T$ is separable iff it is semi-positive definite,
	equivalently when the matrices $M(\epsilon ^{k})$ are positive semidefinite,
	which holds when the matrix $A$ is semi-positive definite. In this case we
	have   the following separability decomposition of length $d$ 
	\be
	T=\sum_{k=0}^{d-1}(A\circ u_{k}u_{k}^{\dagger})\otimes u_{k}u_{k}^{\dagger}.
	\ee
\end{example}

\begin{remark}
	\label{R3}
	If in the last example the matrix $A=(a_{ij})$ is such that $a_{ij}=a$ for $%
	i,j=0,\ldots,d-1$, then the circulant matrix $T$ is a block Toeplitz matrix and
	it is known~\cite{Gur} that Toeplitz block matrices are separable only if
	they are semi-positive definite and for these matrices this holds without
	any assumptions on the block entries of the matrices.
\end{remark}

\section{Conclusions}
Summarizing in this paper we investigate separable decomosition of certain class of block matrices $T$ given by formula~\eqref{blockM}, i.e. when all sublocks are normal in $T$ and commute. Authors for such class of block matrices  prove the  Theorem~\ref{T1} containing main result for this manuscript. Mentioned theorem presents \textit{if and only if} conditions for separability of mentioned class and gives explicit method for construction of separable decomposition. What is the most important here, such decomposition can be written in terms of eigenvalues and eigenvectors in the first part and eigenvectors of the block entries in the second one as in equation~\eqref{ds}.  


\section*{Acknowledgments}
A. Rutkowski was supported
by a postdoc internship decision number DEC\textendash{}
2012/04/S/ST2/00002, from the Polish National Science Center. M. Mozrzymas was supported by National Science Centre project Maestro DEC-2011/02/A/ST2/00305. M. Mozrzymas would like to thank National Quantum Information Centre in Gda\'{n}sk for hospitality where some part of this work was done.

\section{Appendix: Proofs of the theorems from the main text}
\label{appA}

In this additional section we present proofs from the main text. First we give proof of the Proposition~\ref{P1} and then we present argumentation for Lemma~\ref{L1}.
\begin{proof}[Proof of the Proposition~\ref{P1}]
	The matrix $T$ has the following tensor structre 
	\be
	T=\sum_{i,j=1}^{n}E_{ij}\otimes B_{ij}, 
	\ee
	where $\{E_{ij}\}_{i,j=1}^{n}$ is a standard matrix basis and from the
	assumptions concerning the commutativity of the matrices $B_{ij}$ for $i=1,\ldots,n$
	we have 
	\be
	\forall i,j=1,\ldots,n\quad U^{\dagger}B_{ij}U=\operatorname{Diag}(\beta _{1}^{ij},\beta
	_{2}^{ij},...,\beta _{d}^{ij})\equiv D_{ij},
	\ee
	where 
	\be
	U=[u_{1}u_{2}...u_{d}] 
	\ee
	is an unitary matrix whose coloumns are common orthonormal eigenvectors of
	the matrices $B_{ij}.$ From this we have 
	\be
	\widetilde{T}\equiv (\mathbf{1\otimes }U^{\dagger})T(\mathbf{1\otimes }U)=\left( 
	\begin{array}{ccccc}
		D_{11} & D_{12} & D_{13} & \cdots & D_{1n} \\ 
		D_{21} & D_{22} & D_{23} & \cdots & D_{2n} \\ 
		\vdots & \vdots & \vdots & \ddots & \vdots \\ 
		D_{n-11} & D_{n-12} & D_{n-13} & \cdots & D_{n-1n} \\ 
		D_{n1} & D_{n2} & D_{n3} & \cdots & D_{nn}
	\end{array}
	\right), 
	\ee
	where all matrices $D_{ij}$ for $i,j=1,...,n$ are diagonal with eigenvalues of $
	B_{ij}$ on the diagonal. From the structure of the matrix $\widetilde{T}$ it
	is clear that we have the following decomposition of this matrix 
	\be
	\widetilde{T}\equiv \sum_{k=1}^{d}M(\beta _{k}^{ij})\otimes E_{kk}. 
	\ee
	Now using the identity 
	\be
	UE_{kk}U^{\dagger}=u_{k}u_{k}^{\dagger} 
	\ee
	in the equation 
	\be
	T=(\mathbf{1\otimes }U)\widetilde{T}(\mathbf{1\otimes }U^{\dagger})=
	\sum_{k=1}^{d}M(\beta _{k}^{ij})\otimes UE_{kk}U^{\dagger} 
	\ee
	we get the decomposition of the matrix $T$ given in the Proposition.
\end{proof}

\begin{proof}[Proof of the Lemma~\ref{L1}]
		It is clear, that if the matrices $M_{k}$ are semi-positive definite then
		the matrix $T$, as the sum of semi-positive matrices, is semi-positive
		definite.
		
		Suppose now that the matrix $T$ is semi-positive definite and let $
		\{v_{i}\}_{i=0}^{n-1}$ be an orthonormal basis in the space $\mathbb{C}^{n}$, then 
		\be
		\forall X=\sum_{j=1}^{d}\sum_{k=0}^{n-1}x_{kj}v_{k}\otimes u_{j}\in \mathbb{C}^{nd}\qquad (X,TX)\equiv X^{\dagger}TX\geq 0, 
		\ee
		where $x=(x_{ij})\in M(n\times d,\mathbb{C})$ is an arbitrary matrix. Now we have 
		\be
		\begin{split}
			X^{\dagger}TX&=\left( \sum_{s=0}^{n-1}\sum_{i=1}^{d}\overline{x}_{si}v_{s}^{\dagger}\otimes
			u_{i}^{\dagger}\right) \left( \sum_{k,j=1}^{d}\sum_{t=0}^{n-1}x_{tj}M_{k}v_{t}\otimes
			u_{k}u_{k}^{\dagger}(u_{j})\right) =\\	
			&=\sum_{k=1}^{d}\sum_{s,t=0}^{n-1}\overline{x}_{sk}x_{tk}(v_{s},M_{k}v_{t})=%
			\sum_{k=1}^{d}(y_{k},M_{k}y_{k}), 
		\end{split}
		\ee
		where $y_{k}=\sum_{p_{=0}}^{n-1}x_{pk}v_{p}\in\mathbb{C}^{n}$ and when the matrices $x=(x_{ps})$ run over $M(n\times d,\mathbb{C}),$ then they generates all possible sets of $\ d$ vectors $
		\{y_{k}\}_{k=1}^{d},$ $y_{k}\in \mathbb{C}^{n}.$ Now we choose a particular vector $X\in \mathbb{C}^{nd},$ defined by a particular matrix 
		\be
		x=(x_{sp})=\left\{\begin{array}{l} x_{sp}=0\quad for\quad p\neq k,\\
			x_{sk}=x_{s}\in 
			\mathbb{C}, \end{array}\right.
		\ee
		where $k\in \{1,\ldots,d\}$ and is arbitrary. So in the matrix $x$ only the $k^{\text{th}}$ coloumn is not equal to zero and it forms an arbitrary vector from the
		space $\mathbb{C}^{n}$. It is clear that this particular vector $X$ defines the particular
		set of vectors $\{y_{1}=0,y_{2}=0,\ldots,y_{k},\ldots,y_{d}=0\}$, where $y_{k}\in \mathbb{C}^{n}$ is arbitrary.  From semi-positivity of the matrix $T$ for this
		particular vector $X$ we have 
		\be
		0\leq X^{\dagger}TX=\sum_{s=1}^{d}(y_{s},M_{s}y_{s})=(y_{k},M_{k}y_{k}), 
		\ee
		where $k\in \{1,\ldots,d\}$ and is arbitrary and $y_{k}\in \mathbb{C}^{n}$ is also arbitrary so it means that all matrices $M_{k}$ are
		semi-positive.
\end{proof}

\end{document}